\documentclass[conference]{IEEEtran}

\usepackage{amsmath, amssymb, amsfonts, bm}

\usepackage{graphicx}
\usepackage[caption=false,font=footnotesize]{subfig}
\usepackage{float}      
\usepackage{comment}

\usepackage[vlined, ruled, linesnumbered]{algorithm2e}
\SetKwInput{KwInput}{Input}
\SetKwInput{KwOutput}{Output}
\SetKwRepeat{Do}{do}{until}

\makeatother

\usepackage{xcolor}
\usepackage{listings}
\usepackage{balance}

\lstdefinelanguage{Matlab}{
  morekeywords={break,case,catch,continue,else,elseif,end,for,function,
    global,if,otherwise,persistent,return,switch,try,while},
  sensitive=true,
  morecomment=[l]\%,
  morestring=[m]',
}
\usepackage{amsthm}
\theoremstyle{plain}
\newtheorem{theorem}{Theorem}[section]

\newtheorem{proposition}[theorem]{Proposition}

\theoremstyle{definition}

\makeatletter
\def\th@plain{%
  \thm@notefont{}%
  \itshape
  \normalfont   
}
\makeatother

\usepackage{cite}

\usepackage{tikz}
\usetikzlibrary{positioning,shapes,arrows.meta,calc}
\usetikzlibrary{decorations.pathreplacing}

\tikzset{
  >={Latex[length=3mm]},
  block/.style={draw, thick, rectangle, minimum width=2.8cm, minimum height=1.2cm, align=center},
  filter/.style={draw, thick, trapezium, trapezium left angle=70, trapezium right angle=110,
                 minimum width=3.0cm, minimum height=1.2cm, align=center},
  mixer/.style={circle, draw, thick, minimum size=8mm, inner sep=0pt,
                path picture={
                  \draw[]
                    (-2.4mm,-2.4mm)--(2.4mm,2.4mm)
                    (-2.4mm,2.4mm)--(2.4mm,-2.4mm);
                }},
  conn/.style={thick, -{Latex}}
}

\DeclareFontFamily{U}{mathx}{\hyphenchar\font45}
\DeclareFontShape{U}{mathx}{m}{n}{
      <5> <6> <7> <8> <9> <10>
      <10.95> <12> <14.4> <17.28> <20.74> <24.88>
      mathx10
      }{}
\DeclareSymbolFont{mathx}{U}{mathx}{m}{n}
\DeclareFontSubstitution{U}{mathx}{m}{n}
\DeclareMathAccent{\widecheck}{0}{mathx}{"71}

\DeclareFontFamily{U}{mathx}{\hyphenchar\font45}
\DeclareFontShape{U}{mathx}{m}{n}{<-> mathx10}{}
\DeclareSymbolFont{mathx}{U}{mathx}{m}{n}
\DeclareMathAccent{\widebar}{0}{mathx}{"73}

\newfont{\Bd}{msbm10 at 12 truept}
\newfont{\Sc}{eusm10 at 12 truept}

\newcommand\fH[1]{\sbox0{#1}\dimen0=\ht0 \advance\dimen0 -1ex
  \sbox2{\'{}}\sbox2{\raise\dimen0\box2}%
  {\ooalign{\hidewidth\kern.1em\copy2\kern-.5\wd2\box2\hidewidth\cr\box0\crcr}}}

\def\0{{\bf 0}}
\def\1{{\bf 1}}

\def\c{{\bf c}}

\def\x{{\bf x}}
\def\y{{\bf y}}

\def\B{{\bf B}}

\def\F{{\bf F}}
\def\G{{\bf G}}

\def\0{{\bf 0}}

\def\mbF{{ \mathbb{F} }}

\def\mcA{{ \mathcal{A} }}

\def\mcC{{ \mathcal{C} }}

\def\mcF{{ \mathcal{F} }}

\def\mcH{{ \mathcal{H} }}
\def\mcI{{ \mathcal{I} }}

\def\mcS{{ \mathcal{S} }}
\def\mcT{{ \mathcal{T} }}

\makeatletter
\def\ifundefined{\@ifundefined}
\makeatother

\title{
  Early Selection of Polar Ensemble Decoding with Successive Cancellation Style Component Decoder
  }

\author{Pin-Jing Li and Yu-Chih Huang \\
Institute of Communications Engineering, National Yang Ming Chiao Tung University\\
\{ouo.ee11@nycu.edu.tw, jerryhuang@nycu.edu.tw\}
}

\begin{document}
\maketitle
\begin{abstract}
Parallel ensembles of successive cancellation (SC)-style
decoders improve error-rate performance by decoding multiple
representations of the received vector. Conventional ensembles,
however, keep every member active until completion.
To reduce computational cost while retaining the diversity gain 
of ensemble decoding, we propose early ensemble selection (EES),
which ranks members using intermediate decoding states and 
retains a prescribed numbers of members at each 
predefined selection stage.
Within the EES framework, 
we propose two selection rules, namely 
path-metric-based selection (PM-EES) 
and neural-network-based selection (NN-EES). 
We analytically characterize the computational cost of EES 
through stage-dependent decoder log-likelihood ratio (LLR) 
update counts and provide an information-theoretically motivated 
criterion for choosing selection stages.
Simulations on the \((128,64)\) extended BCH code and the
\(\operatorname{RM}(4,8)\) code show that suitable selection
schedules approach full-ensemble error-rate performance
with fewer decoder LLR updates.
\end{abstract}
\begin{IEEEkeywords}

automorphism ensemble decoding,
early ensemble selection,
successive cancellation decoding

\end{IEEEkeywords}

\section{Introduction}

Successive cancellation (SC)-style component decoders provide
a common decoding framework across several code families
\cite{geiselhart2021RM_AED,li2026PED,pillet2021UTL,
mazhiming2022BLTA,han2025eBCH_RM}.
Parallel ensembles of these decoders improve error-rate
performance while retaining the latency advantage of
component decoders with small list sizes, at the cost of
running multiple ensemble members.

Conventional parallel ensembles keep every member active
until completion. However, intermediate decoding states may
already provide useful evidence for deciding which members
deserve further computation. This motivates online selection
among partially decoded ensemble members.

Prior work improves ensemble efficiency through offline
construction and online adaptation. Offline methods avoid
redundancy among SC-equivalent members
\cite{pillet2022classification,bioglio2023group}
or select complementary members using simulated decoding
failures~\cite{kestel2023AED_URLLC}, producing a fixed
ensemble independent of the received word.
Reduced-delay automorphism ensemble decoding (AED)
constructs automorphism sets that permit early termination
once partial outputs collectively cover the information
bits~\cite{fominykh2025infoAED_ES}.

Online adaptation has also been studied under different
decoding architectures. Sequential AED launches component
decoders one at a time and uses metrics from completed
decoding attempts to decide whether to activate additional
members~\cite{pillet2026sequentialAED_ES}.
Successive cancellation automorphism list (SCAL) decoding
jointly prunes paths initialized under different
automorphisms within a shared
list~\cite{johannsen2023SCAL}.

For parallel Reed-Muller (RM) code ensembles, quasi-optimal
path convergence (QOPC) uses agreement among current partial
path metrics to trigger member pruning~\cite{tian2025_ES_QOPC}.
Subcode estimate convergence (SEC) uses agreement among
subcode partial path metrics within groups of factor-graph
permutations after half the block has been
decoded~\cite{tian2025_ES_SEC}.
These methods use convergence among members to trigger
selection of a single member for continued decoding,
resulting in computational costs that depend on the
received vector.

A complementary problem is to decide which partially decoded 
members to continue under a prescribed complexity budget.
In this work, we propose early ensemble selection (EES)
for parallel ensembles with SC-style component decoders.
At predefined stages, EES ranks members using intermediate
decoding states and retains a prescribed number of them,
providing explicit control over the decoder log-likelihood ratio (LLR) update count.
The framework preserves independent component decoding
and supports both path-metric-based and learned ranking
policies.
We analyze how the selection stages and retained member
counts determine decoder work, and use the evolution of
decoding uncertainty to motivate stage selection.
Simulations on extended BCH (eBCH) codes using polar ensemble decoding (PED) and RM codes
using AED demonstrate the resulting tradeoff between
error-rate performance and decoder work.

\section{Background}\label{sec:background}
\subsection{Polar Codes and Parallel Ensemble Decoding}

A length-$N=2^m$ polar code or polar subcode encodes
$x=u\G_m$, where
$\G_m=\B_m\F^{\otimes m}$,
$\F=\bigl[\begin{smallmatrix}1&0\\1&1\end{smallmatrix}\bigr]$,
and $\B_m$ is the bit-reversal permutation matrix.
The input vector $u\in\mbF_2^N$ contains $K$ information bits
at positions in the information set $\mcI$.
The remaining positions, indexed by $\mcF=\mcI^c$,
are frozen to zero or constrained to be linear combinations
of preceding information bits.
SC decoding~\cite{arikan2009channel} estimates $u$ sequentially,
whereas successive cancellation list (SCL)
decoding~\cite{tal2015list} retains up to $L$ candidate paths
to mitigate error propagation.

Parallel ensemble decoding, including AED~\cite{geiselhart2021RM_AED,pillet2021UTL}
and PED~\cite{li2026PED},
runs $M$ independent list-$L$ SCL decoders on differently
permuted versions of the received LLR vector.
After all members complete decoding, the most likely candidate
among their outputs is selected.
The ensemble's latency is approximately that of a list-$L$
decoder, while its decoding complexity scales as $MLN\log_2N$.
By trading $M$ against $L$ while keeping $ML$ fixed,
suitable ensembles can match or improve upon the error-rate
performance of a single list-$ML$ SCL
decoder~\cite{geiselhart2021RM_AED,li2026PED}.

At stage $t$, member $i$ extends each surviving path $\ell$
by an admissible bit $b$, with approximate path metric
\begin{equation}
p_{i,t}^{(\ell,b)}
=
p_{i,t-1}^{(\ell)}
+
\mathbf{1}\!\left\{
b\neq\operatorname{HD}\!\left(\lambda_{i,t}^{(\ell)}\right)
\right\}
\left|\lambda_{i,t}^{(\ell)}\right|,
\label{eq:scl_pm_update}
\end{equation}
where $\lambda_{i,t}^{(\ell)}$ is the bit LLR and
$\operatorname{HD}(\lambda)=\mathbf{1}\{\lambda<0\}$.
Both values $b\in\{0,1\}$ are considered at information positions.
At frozen positions, $b$ equals the prescribed value:
zero for static frozen bits or a specified linear combination
of earlier decisions on that path for dynamic frozen bits.
Each member $i$ retains up to $L$ candidates with the smallest
metrics. Denoting the number of retained candidates of member $i$ at stage $t$ by $L_{i,t}\leq L$,
we index their metrics in nondecreasing order as
$p_{i,t,1},p_{i,t,2},\ldots,p_{i,t,L_{i,t}}$.

\subsection{Ensemble Constructions}

Ensemble constructions have been studied for several
code families.
For RM codes, AED uses general affine permutations, allowing all members
to share the same frozen set while decoding differently
permuted versions of the received
vector~\cite{geiselhart2021RM_AED}.
The connection between RM and polar
codes~\cite{arikan2010RM_as_polar} enables the use of
SC-style component decoders for RM-AED.
For polar codes, 
upper triangular linear
and block lower triangular affine structures
can provide SC-compatible automorphism for polar AED 
\cite{pillet2021UTL,mazhiming2022BLTA,pillet2022classification}.
PED uses coordinate permutations
to obtain equivalent polar-subcode representations of a
binary linear block code.
Automorphisms of the original code then generate additional
ensemble members that share the same dynamic frozen
constraints~\cite{li2026PED,trifonov2015subcode,lin2020unidec,lin2025unidec}.
The proposed EES framework applies to these ensembles
by selecting whole members during decoding.


\section{Early Ensemble Selection for Parallel Ensemble Decoding}\label{sec:EES}

The best path metric of each ensemble member can help identify
promising members before decoding completes.
Figure~\ref{fig:pm_justify} shows the probability that retaining
the $M_t$ members with the smallest current best path metrics
preserves at least one member attaining the full ensemble's
minimum final path metric.
This retention probability generally increases as decoding
progresses, indicating that intermediate metrics become more
informative about the final ensemble outcome.
This observation suggests that terminating less promising
members early may reduce decoding complexity with only a
small loss in error-rate performance.

\begin{figure}[!t]
    \centering
    \includegraphics[width=\linewidth]{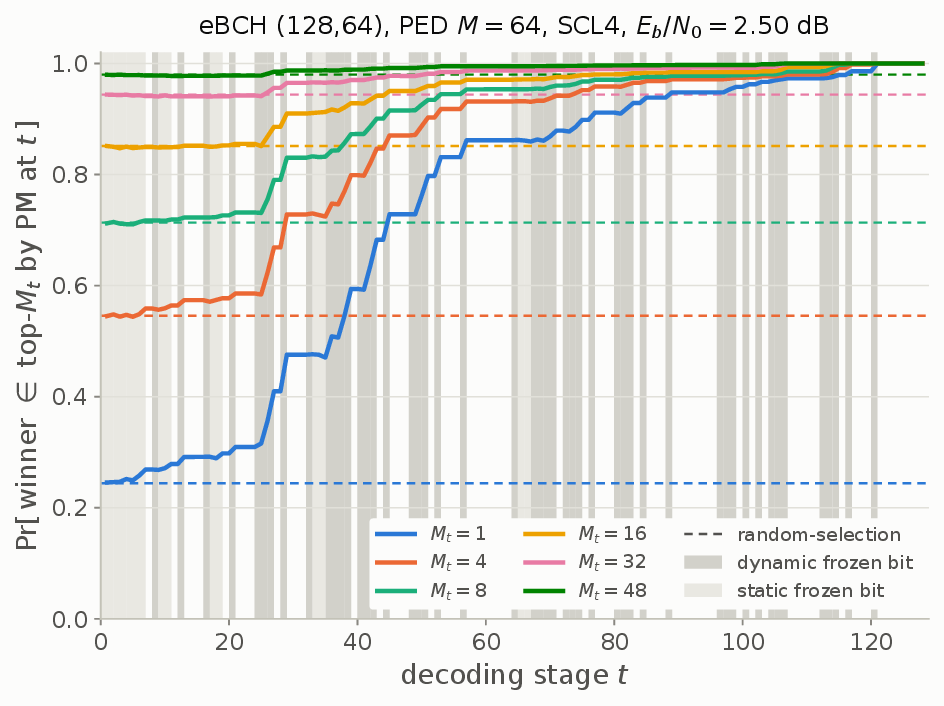}
    \caption{Final-winner retention probability for $(128,64)$ eBCH PED
    with $M=64$ members using SCL ($L=4$) at $E_b/N_0=2.5$ dB.
    Solid curves show the probability that selecting the $M_t$ members
with the smallest current best path metrics after bit $t$ retains
at least one member attaining the full ensemble's minimum final
path metric.
Dashed lines show the corresponding probabilities for uniform
random selection, averaged over frames. These baselines can exceed
$M_t/M$ because multiple members may attain the minimum final
path metric.
    }
    \label{fig:pm_justify}
\end{figure}

For a positive integer $n$, let $[n]:=\{1,\ldots,n\}$.
Consider $M$ ensemble members running independent SCL decoders
with list size $L$ in parallel.
Let $\mcA_t\subseteq[M]$ denote the set of members active
while bit $t$ is decoded, with all members initially active,
i.e., $\mcA_1=[M]$.
Let $\mcT\subseteq[N]$ be a predefined set of selection stages.
At each $t\in\mcT$, after all active members complete bit $t$,
a selector retains promising members based on path-metric
information collected through that bit.

The selector returns a nonempty subset of the active members
\begin{equation}
    \mcS_t=\operatorname{Select}(\mcH_t,\mcA_t),
    \qquad \mcS_t\subseteq\mcA_t,
\end{equation}
where $\mcH_t$ contains the path-metric information available at the
selection stage. Members in $\mcA_t\setminus\mcS_t$ are terminated,
and the active set evolves as
\begin{equation}
    \mcA_{t+1}=
    \begin{cases}
        \mcS_t, & t\in\mcT,\\
        \mcA_t, & t\notin\mcT.
    \end{cases}
    \label{eq:active_ensemble_update}
\end{equation}

We specified the number of retained members at each selection stage
before decoding. Denoting this number by $M_t$, the selector
satisfies $|\mcS_t|=M_t$, with $1\leq M_t\leq|\mcA_t|$.

Algorithm~\ref{alg:EES} summarizes the procedure. 
The following subsections specify the selector and its inputs.

\begin{algorithm}[t]
\caption{Early Ensemble Selection}
\label{alg:EES}
\KwInput{Received vector $\y$, decoder list size $L$, ensemble permutations
$\{\pi_i\}_{i=1}^{M}$, selection stages $\mcT$, retained widths $\{M_t:t \in \mcT\}$,
statistics checkpoints $\mcC=\bigcup_{t\in\mcT}\mcC_t$, selection policy
$\operatorname{Select}(\cdot,\cdot)$}
$\mcA_1\leftarrow[M]$\;
Initialize decoder $i$: $\pi_i(\y),\, p_{i,0,1}=0,\, \forall i\in\mcA_1$\;
\For{$t=1$ \KwTo $N$}{
    Advance all active decoders $i\in\mcA_t$ through bit $t$
    in parallel\;
    \If{$t\in\mcC$}{
        Record $\operatorname{stat}_i(t)$ for each member in $\mcA_t$\;
    }
    \eIf{$t\in\mcT$}{
        Form $\mcH_t$ from the information available through bit $t$\;
        $\mcS_t\leftarrow\operatorname{Select}(\mcH_t,\mcA_t)$\;
        Terminate decoders in $\mcA_t\setminus\mcS_t$\;
        $\mcA_{t+1}\leftarrow\mcS_t$\;
    }{
        $\mcA_{t+1}\leftarrow\mcA_t$\;
    }
}
Undo each member's permutation and select the final candidate from
members in $\mcA_{N+1}$ using the full ensemble's combining rule\;
\KwOutput{Estimated codeword $\hat{\x}$}
\end{algorithm}

\subsection{Selection Based on Path Metrics (PM-EES)}

The selector uses the current best path metric of each active
member,
\(
    \mcH_t=(p_{i,t,1})_{i\in\mcA_t},
\)
without recording metric histories, so $\mcC=\emptyset$.
The resulting selection rule retains the $M_t$ members
with the smallest metrics
\begin{equation}
    \mcS_t=
    \underset{\substack{\mcS\subseteq\mcA_t\\|\mcS|=M_t}}
    {\arg\min}
    \sum_{i\in\mcS}p_{i,t,1}.
    \label{eq:pm_selector}
\end{equation}
Ties are broken in favor of members with smaller indices.

\subsection{Neural-Network-Based Selection (NN-EES)}

PM-EES ranks members using only their current best path
metrics, although each decoder produces a list of path
metrics at every decoding stage.
NN-EES instead assigns each member a learned score based
on its recorded path-metric history and member index.

For each selection stage $t\in\mcT$, let
$\mcC_t\subseteq[t]$ be a predetermined set of statistics
checkpoints that includes the current selection stage,
i.e., $t\in\mcC_t$.
Statistics are recorded at all checkpoints in
$\mcC=\bigcup_{t\in\mcT}\mcC_t$.
After bit $s$, member $i$ holds $L_{i,s}\leq L$ surviving
paths with metrics
$p_{i,s,1}\leq\cdots\leq p_{i,s,L_{i,s}}$.
After completing each checkpoint bit $s\in\mcC$,
each active member $i$ records
\begin{equation}
    \operatorname{stat}_i(s)=\bigl(
        L_{i,s},\;
        p_{i,s,1},\;
        p_{i,s,2}-p_{i,s,1},\;
        \bar p_{i,s},\;
        p_{i,s,L_{i,s}}
    \bigr),
    \label{eq:member_stat}
\end{equation}
where $\bar p_{i,s}$ is the mean of the $L_{i,s}$ path
metrics, and the gap $p_{i,s,2}-p_{i,s,1}$ is set to zero
when $L_{i,s}=1$.

For $i\in\mcA_t$, define the feature vector
\begin{equation}
    \mathbf h_{i,t}=\operatorname{concat}\!\left(
        (\operatorname{stat}_i(s))_{s\in\mcC_t},\,
        \mathbf e_i
    \right),
    \label{eq:model_selector_input}
\end{equation}
where $\mathbf e_i\in\{0,1\}^M$ is the one-hot encoding
of the member index.
The selector input collects the feature vectors of all
active members 
$\mcH_t=(\mathbf h_{i,t})_{i\in\mcA_t}$.

A separate neural network $f_{\theta_t}$ is trained for
each selection stage $t$. 
The network maps each member's feature vector to a
scalar score,
\begin{equation}
    q_{i,t}=f_{\theta_t}(\mathbf h_{i,t}),\qquad i\in\mcA_t,
    \label{eq:model_score}
\end{equation}
and is shared across all active members at that stage.
The selector retains the $M_t$ active members with the
largest scores
\begin{equation}
    \mcS_t=
    \underset{\substack{\mcS\subseteq\mcA_t\\|\mcS|=M_t}}
    {\arg\max}
    \sum_{i\in\mcS}q_{i,t}.
    \label{eq:model_selector}
\end{equation}

Training labels are obtained from the full ensemble.
For each received vector in the training set, all $M$ members
are decoded to completion without selection.
The label of member $i$ is
\begin{equation}
    z_i=
    \begin{cases}
        1, & p_{i,N,1}=\min_{j\in[M]}p_{j,N,1},\\
        0, & \text{otherwise.}
    \end{cases}
    \label{eq:model_label}
\end{equation}
Multiple members may receive positive labels when their
best final path metrics tie for the minimum.

The network is trained by minimizing the binary cross-entropy
between the label $z_i$ and the sigmoid output $\sigma(q_{i,t})$,
summed over all members and all training vectors,
where $\sigma(q)=(1+e^{-q})^{-1}$.
The network architecture and training configuration are given
in Section~\ref{sec:simulation}.


\begin{figure}[!t]
\centering
\resizebox{\columnwidth}{!}{%
\begin{tikzpicture}[
  font=\small,
  eesline/.style={-{Latex[length=1.5mm]}, semithick},
  eesblock/.style={draw, semithick, fill=white, minimum height=6mm,
    inner sep=2pt, align=center},
  eesdec/.style={eesblock, minimum width=11mm},
  eesoff/.style={eesdec, draw=black!30, text=black!45,
    fill=black!3, dashed},
  eesband/.style={draw=black!55, dashed, fill=black!8},
  eesstop/.style={font=\large, inner sep=0pt},
]

\draw[semithick] (1.7,-1.9) -- node[above] {$\y$} (2.5,-1.9);
\draw[semithick] (2.5,0) -- (2.5,-3.8);

\path[eesband] (5.45,0.45) rectangle (5.95,-4.2);
\path[eesband] (8.35,0.45) rectangle (8.85,-4.2);
\node[eesblock, minimum width=15mm, minimum height=4mm] (select1) at (5.7,1.2)
  {Select $t_1$};
\node[eesblock, minimum width=15mm, minimum height=4mm] (select2) at (8.6,1.2)
  {Select $t_2$};
\draw[eesline] ($(select1.west)+(-0.35,0)$) -- (select1.west);
\node[left=0pt] at ($(select1.west)+(-0.35,0)$) {$\mcH_{t_1}$};
\draw[eesline] ($(select2.west)+(-0.35,0)$) -- (select2.west);
\node[left=0pt] at ($(select2.west)+(-0.35,0)$) {$\mcH_{t_2}$};
\draw[eesline] (select1.south) -- (5.7,0.45);
\draw[eesline] (select2.south) -- (8.6,0.45);

\foreach \member/\row in {1/0,2/-1.1,3/-2.2,M/-3.8}{
  \node[eesblock, minimum width=6.5mm] (perm\member) at (3.15,\row)
    {$\pi_{\member}$};
  \node[eesdec] (first\member) at (4.45,\row) {SCL};
  \draw[eesline] (2.5,\row) -- (perm\member);
  \draw[eesline] (perm\member) -- (first\member);
}

\foreach \member/\row in {1/0,3/-2.2,M/-3.8}{
  \node[eesdec] (second\member) at (7.15,\row) {SCL};
  \draw[eesline] (first\member.east) -- (second\member.west);
  \fill (5.7,\row) circle (1.3pt);
}
\draw[semithick] (first2.east) -- (5.6,-1.1);
\node[eesstop] at (5.7,-1.1) {$\times$};
\node[eesoff] at (7.15,-1.1) {off};
\node[eesoff] at (9.85,-1.1) {off};

\foreach \member/\row in {1/0,3/-2.2}{
  \node[eesdec] (last\member) at (9.85,\row) {SCL};
  \node[eesblock, minimum width=7.5mm] (inv\member) at (11.25,\row)
    {$\pi_{\member}^{-1}$};
  \draw[eesline] (second\member.east) -- (last\member.west);
  \fill (8.6,\row) circle (1.3pt);
  \draw[eesline] (last\member) -- (inv\member);
}
\draw[semithick] (secondM.east) -- (8.5,-3.8);
\node[eesstop] at (8.6,-3.8) {$\times$};
\node[eesoff] at (9.85,-3.8) {off};

\foreach \col in {3.15,4.45,7.15,9.85,11.25}{
  \node at (\col,-3.0) {$\vdots$};
}
\node[eesblock, minimum width=12mm, minimum height=46mm]
  (combiner) at (12.75,-1.9) {Final\\combiner};
\draw[eesline] (inv1.east) -- (combiner.west |- inv1.east);
\draw[eesline] (inv3.east) -- (combiner.west |- inv3.east);
\draw[eesline] (combiner.east) -- node[above] {$\hat\c$} (13.9,-1.9);

\node at (4.45,-4.45) {$M$ members};
\node at (7.15,-4.45) {$M_{t_1}$ members};
\node at (9.85,-4.45) {$M_{t_2}$ members};

\end{tikzpicture}%
}
\caption{Ensemble member selection at two stages $t_1<t_2$.
Each row represents one list-$L$ SCL decoder, with successive
blocks showing its decoding intervals.
After all active members complete bit $t_k$, $k\in\{1, 2\}$, 
the selector uses
$\mcH_{t_k}$, including the current best path metrics
$p_{i,t_k,1}$, to retain $M_{t_k}$ members.
Crosses mark example member terminations, and gray blocks indicate
decoding work omitted thereafter.
Surviving members complete decoding, apply their inverse
permutations to the decoded candidates, and pass the candidates
to the final combiner.
The active and selected sets satisfy
$\mcA_{t_1}=[M]\supseteq\mcS_{t_1}=\mcA_{t_2}
\supseteq\mcS_{t_2}$, with $M_{t_k}=|\mcS_{t_k}|$.
The same architecture extends to additional selection stages.}
\label{fig:EES}
\end{figure}

\section{Theoretical Analysis}\label{sec:theorem}
\subsection{Complexity Analysis}
\label{sec:theorem:complexity}

We count scalar LLR updates along the standard SC decoding
traversal~\cite{niu2016SCschedule}.
Let $C_N(b)$ denote the work required
to decode up to bit-$b$ of a length-$N$ code. 

Entering either child of a length-$s$ node requires $s/2$
LLR updates.
Each of the $\log_2N$ tree levels therefore contributes $N$
updates to complete traversal, giving
$C_N(N)=N\log_2N$~\cite{arikan2009channel,tal2015list}.
For $M$ independent SCL decoders, each with list size $L$,
the full-ensemble LLR-update count is $MLN\log_2N$.
For a single selection after bit $t$, the $M-M_t$ terminated
members decode only the first $t$ bits, while the $M_t$
retained members decode all $N$ bits.
The total decoder LLR-update count under this model is
\begin{equation}
  L\bigl[M_t C_N(N)+(M-M_t)C_N(t)\bigr].
  \label{eq:EES_single_stage_cost}
\end{equation}

For selection stages $t_1<\cdots<t_{|\mcT|}$,
let $M_{t_0}=M$ and let $M_{t_j}$ denote the number of
members retained after stage $t_j$.
The $M_{t_{j-1}}-M_{t_j}$ members terminated at that stage
have each completed $t_j$ bits, giving the multistage
LLR-update count
\begin{equation}
  L\left[M_{t_{|\mcT|}}C_N(N)+
  \sum_{j=1}^{|\mcT|}(M_{t_{j-1}}-M_{t_j})C_N(t_j)\right].
  \label{eq:EES_multistage_cost}
\end{equation}
This expression reduces to~\eqref{eq:EES_single_stage_cost}
when $|\mcT|=1$.

\begin{theorem}
For $N\geq2$ a power of two and $1 \leq b \leq N$, 
let $h=N/2$. The cost of decoding the
first $b$ bits satisfies
\begin{equation}
 C_N(b)=
 \begin{cases}
  h+C_h(b), & 0<b\leq h,\\
  N+h\log_2 h+C_h(b-h), & h<b\leq N,
 \end{cases}
 \label{eq:EES_complexity}
\end{equation}
with $C_1(1)=0$.
\end{theorem}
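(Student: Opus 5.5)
The plan is to argue directly from the recursive structure of the SC decoding tree and the cost model stated before the theorem. Entering either child of a length-$s$ node costs $s/2$ LLR updates, and a node that is fully traversed costs $s\log_2 s$. I would first make precise what $C_N(b)$ counts. It is the number of scalar LLR updates performed by the standard depth-first SC traversal from the root, with the channel LLRs given, up to and including the computation of the decision LLR of leaf $b$. In particular, LLRs that are only needed for later leaves are not counted. With this reading the base case $C_1(1)=0$ is immediate: a length-1 node is itself a leaf, and its LLR is already available.

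For the length-$N$ root, the traversal first enters the left child, costing $h=N/2$ updates (the $f$-operations). It then runs the complete SC traversal of the left subtree, which is a length-$h$ SC decoder fed with those LLRs. Only after every leaf of the left subtree is decoded, and the partial sums are returned, does the traversal enter the right child. That step costs another $h$ updates (the $g$-operations), followed by the traversal of the right length-$h$ subtree. I will also state explicitly that partial-sum (hard-decision) propagation is not counted as LLR updates, consistent with $C_N(N)=N\log_2N$.

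I then split into the two cases of the theorem.

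\emph{Case $0<b\le h$.} Leaf $b$ lies in the left subtree and is its $b$-th leaf. The work up to leaf $b$ is therefore the $h$ updates to enter the left child, plus the work of the length-$h$ decoder up to its bit $b$. This gives $h+C_h(b)$.

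\emph{Case $h<b\le N$.} The left subtree must be completed, which costs $h+C_h(h)=h+h\log_2 h$ using the full-traversal count $C_h(h)=h\log_2h$. Entering the right child costs another $h$. Leaf $b$ is then the $(b-h)$-th leaf of the right subtree, contributing $C_h(b-h)$. Summing gives $h+h\log_2h+h+C_h(b-h)=N+h\log_2h+C_h(b-h)$.

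As a consistency check, I would verify by induction that the recursion reproduces $C_N(N)=N\log_2N$. With $b=N$ the recursion gives $N+h\log_2h+C_h(h)=N+2h\log_2h=N+N(\log_2N-1)=N\log_2N$. This confirms that the value used for $C_h(h)$ in the second case agrees with the recursion itself, so the argument is a clean strong induction on $m=\log_2N$.

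The main obstacle is not algebraic but definitional. One must pin down that the traversal is lazy, so that no right-child $g$-updates occur before the left subtree finishes, and that the per-child entry cost is exactly $s/2$. Both follow from the schedule in~\cite{niu2016SCschedule} and the cost model stated above, and I would cite them explicitly. The remaining subtlety is the indexing: the $b$-th leaf overall is the $(b-h)$-th leaf of the right subtree. This follows because the bit-reversal in $\G_m$ is absorbed so that SC decodes leaves in natural left-to-right order.
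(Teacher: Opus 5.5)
Your proposal is correct and follows the paper's proof: for $b\le h$ you pay $h$ to enter the left child and recurse, and otherwise you complete the left subtree at cost $h+h\log_2 h$, pay $h$ to enter the right child, and recurse on $b-h$, with the same $b=N$ consistency check against $N\log_2N$. Your added remarks (lazy traversal order, partial-sum updates not counted, leaf indexing) only make explicit what the paper's cost model already assumes.
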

\begin{IEEEproof}
For $0<b\leq h$, the traversal enters the left child at cost $h$ and
decodes $b$ bits within it. For $b>h$, it enters and completes the left
child, then enters the right child and decodes $b-h$ bits there.
Entering both children costs $2h=N$, and completing the left subtree
costs $h\log_2h$, yielding the second branch. At $b=N$, this gives
$N+2h\log_2h=N\log_2N$, consistent with the full traversal.
\end{IEEEproof}

\begin{proposition}
For $b=2^r$, where $r$ is an integer satisfying $0\leq r\leq\log_2N$,
\[
 C_N(b)=N-b+b\log_2b.
\]
\end{proposition}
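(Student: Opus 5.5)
We prove the formula by induction on $m=\log_2 N$, with the recursion~\eqref{eq:EES_complexity} of the preceding theorem as the only input. For a fixed $N$, the admissible values $b=2^r$ with $0\le r\le\log_2N$ fall into two cases. The case $b=N$ is the top branch. Every other case has $b\le h=N/2$, so only the first branch of~\eqref{eq:EES_complexity} is used and $b$ is itself an admissible power of two for the half-length code.

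For the base case $N=1$, the only choice is $b=1$. Then $C_1(1)=0=1-1+1\cdot\log_2 1$, as required. As a consistency check one can also verify $N=2$ directly: $C_2(1)=1+C_1(1)=1=2-1+0$, and $C_2(2)=2+1\cdot 0+C_1(1)=2=2-2+2\log_2 2$.

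For the inductive step, let $N\ge2$ and assume $C_h(b)=h-b+b\log_2 b$ for every power of two $b\le h$.
\begin{itemize}
\item If $b=2^r\le h$, the first branch and the induction hypothesis give
\[
C_N(b)=h+C_h(b)=h+h-b+b\log_2 b=N-b+b\log_2 b.
\]
\item If $b=N$, the second branch applies with $b-h=h$, which is a power of two no larger than $h$. The hypothesis gives $C_h(h)=h\log_2 h$, so
\[
C_N(N)=N+2h\log_2 h=N+N(\log_2N-1)=N\log_2N.
\]
This equals $N-N+N\log_2N$, which matches the claim. The same value was already noted in the proof of the theorem.
\end{itemize}

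There is no real obstacle. The only point requiring care is that the claim for general $b\le h$ must be an induction hypothesis covering all powers of two at the smaller length. Applying the first branch repeatedly without that hypothesis would not close the argument. An alternative is to unroll directly: for $b\le h$, apply the first branch until the code length reaches $b$, which accumulates $N/2+N/4+\cdots+b=N-b$. Then add $C_b(b)=b\log_2 b$ from the full-traversal count, giving $N-b+b\log_2 b$ again. I would present the induction and mention the telescoping sum as intuition.
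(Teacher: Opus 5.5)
Your proof is correct and is essentially the paper's argument. The paper's proof is your ``unrolling'' alternative, charging $N/2+N/4+\cdots+b=N-b$ for the descent to the first length-$b$ subtree and $b\log_2 b$ for its complete traversal; your induction on $\log_2 N$ formalizes that telescoping through the first branch of the recursion, except that your $b=N$ case rederives $C_N(N)=N\log_2 N$ from the second branch instead of citing the level-counting fact, which makes your version self-contained given the recursion alone.
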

\begin{IEEEproof}
The descent to the first length-$b$ subtree costs
$N/2+N/4+\cdots+b=N-b$, followed by a complete traversal of that subtree
at cost $b\log_2b$. 
\end{IEEEproof}

\subsection{Choice of Selection Stage $t$}
\label{sec:theorem:t_selection}

Later selection allows more decoding evidence to accumulate,
but the benefit of waiting depends on how informative
the intervening bits are.
To motivate the choice of selection stage, we examine
how decoding uncertainty evolves.

Let $\mcI$ and $\mcF=[N]\setminus\mcI$ denote the information
and frozen sets, and define
$\mcI^{(t)}=\mcI\cap[t]$ and $\mcF^{(t)}=\mcF\cap[t]$.
Let $U_1^N$ denote the polar-input vector and $Y_1^N$ the
corresponding output over 
a binary-input memoryless symmetric (BMS) channel.
Following the auxiliary causal frozen-information model
of~\cite{cocskun2022information}, we quantify the uncertainty
in the information bits processed through stage $t$ by
\begin{equation}
  \bar D_t :=
  H(U_{\mcI^{(t)}}\mid Y_1^N,U_{\mcF^{(t)}}).
\end{equation}
Define the change in uncertainty at stage $t$ as
\(
  \Delta_t := \bar D_t-\bar D_{t-1}.
\)
By the chain rule,
\begin{equation}
  \Delta_t =
  \begin{cases}
    H(U_t\mid Y_1^N,U_1^{t-1}),
      & t\in\mcI,\\
    -I(U_t;U_{\mcI^{(t-1)}}\mid
       Y_1^N,U_{\mcF^{(t-1)}}),
      & t\in\mcF.
  \end{cases}
\end{equation}
Thus, an information bit adds its remaining uncertainty, 
consistent with the branching of candidate decoding paths in SCL.
Revealing a frozen constraint provides a consistency check
on these paths and can reduce uncertainty about preceding
information bits.

These identities motivate placing selection stages immediately
after frozen constraints have been processed, subject to the
decoding complexity budget. 
We adopt this as a heuristic criterion for choosing selection
stages and leave optimal ensemble-selection scheduling
for future work.

For PM-EES, a complementary explanation follows directly from
the approximate path-metric update in~\eqref{eq:scl_pm_update}.
At an information position, every surviving path has a child
with zero metric increment, so each member's best path metric
remains unchanged.
Consequently, positions that are information bits for all
members leave the PM-EES ranking unchanged.
This explains the flat retention probability across shared
information positions in Fig.~\ref{fig:pm_justify} and
motivates selecting after frozen positions, where the
ranking can change.

\section{Simulation Results}\label{sec:simulation}

We evaluate EES over the AWGN channel with BPSK modulation.
We consider the $(128,64)$ eBCH code using
PED-SCL~\cite{li2026PED} with list size $L=4$
in Fig.~\ref{fig:static_EES_tsweep_eBCHm7t10},
and the $\operatorname{RM}(4,8)$ code using
AED-SC~\cite{geiselhart2021RM_AED}
in Fig.~\ref{fig:PM_NN_EES_tsweep_RM_AED}.
Both full ensembles contain $M=64$ members constructed
according to their respective ensemble designs.
Decoder complexity is measured by the scalar LLR-update
count defined in Section~\ref{sec:theorem:complexity}.

For the single-stage NN-EES experiments, a separate scorer
is trained for each code and selection stage.
Each scorer has one hidden layer of eight ReLU units
and a linear output unit.
For $t\geq4$, we use the checkpoint set
\(
  \mcC_t=
  \{4j:1\le j<\lfloor t/4\rfloor\}\cup\{t\}.
\)
Statistics are recorded after each checkpoint bit.
The input dimension is $D_t=5|\mcC_t|+M$.
The scorer has $8D_t+17$ parameters and requires
$40|\mcC_t|+8$ multiplications per member when the
one-hot member index is implemented as a weight lookup.
For each code, we use a training dataset of $96{,}000$
received vectors generated at $E_b/N_0=2.5$ dB,
with member labels defined by~\eqref{eq:model_label}.

The scorer has a small parameter count and requires few
multiplications.
For example, for the $(128,64)$ eBCH code with $M=64$ and $t=16$,
we have $\mcC_{16}=\{4,8,12,16\}$, giving
$D_{16}=5\cdot4+64=84$ and $8\cdot84+17=689$ parameters.
Scoring one member requires $40\cdot4+8=168$ multiplications,
or $64\cdot168=10{,}752$ multiplications for the whole ensemble.
For comparison, the full ensemble with $L=4$ requires
$MLN\log_2N=229{,}376$ LLR updates. The multiplication count
is approximately $4.7\%$ of this update count.
Despite its small size, the scorer improves error-rate
performance over PM-EES at early selection stages,
as shown below.
Further optimization of the network architecture and
evaluation with larger training sets are left for future work.

\begin{figure}[!t]
  \centering
  \includegraphics[width=\columnwidth]
    {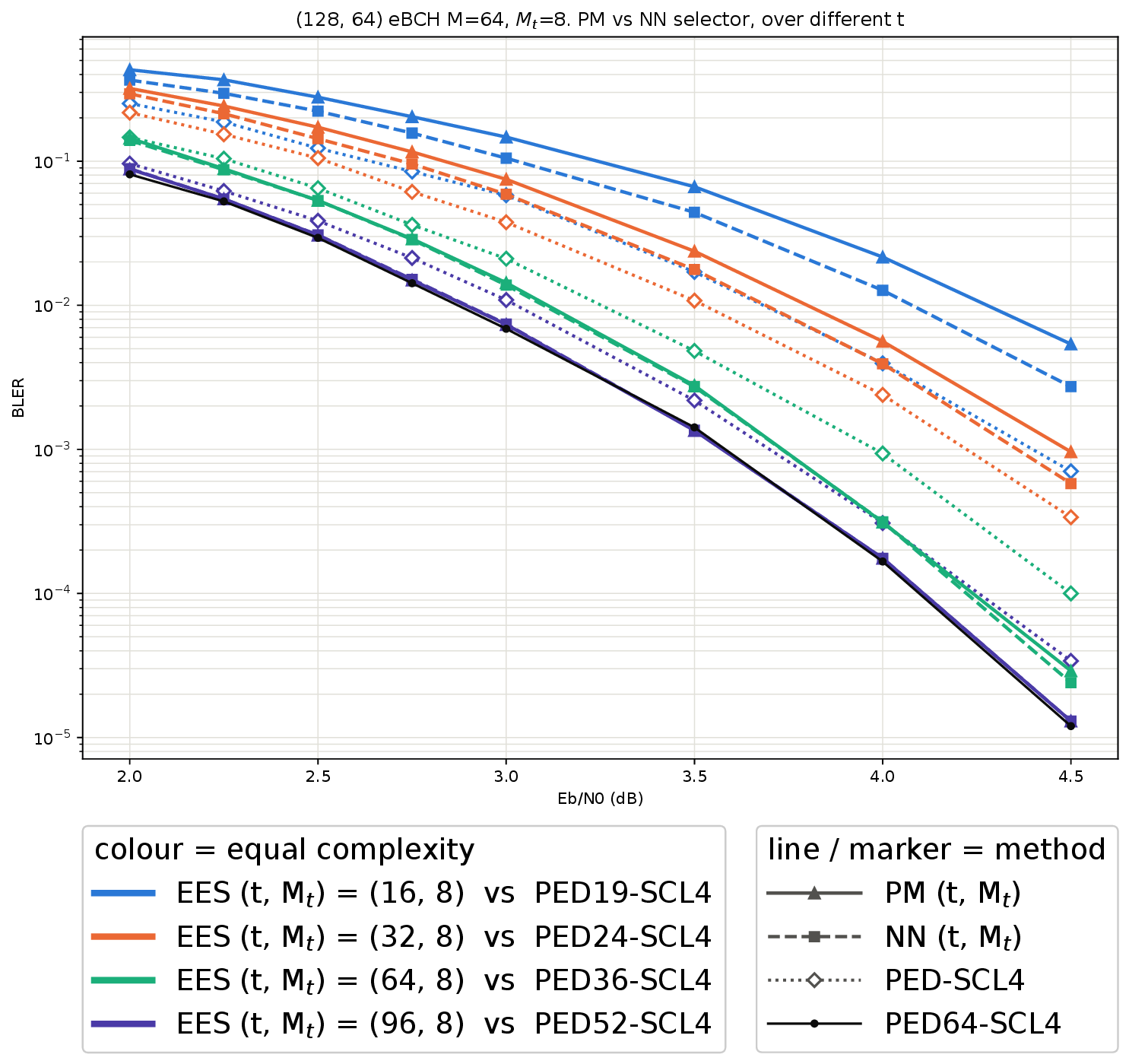}
    \caption{
    Single-stage EES for $(128,64)$ eBCH PED with $M=64$,
    $L=4$, and $M_t=8$, compared with smaller ensembles
    at matched LLR-update counts.
    NN-EES improves error-rate performance over PM-EES
    at $t=16$ and $t=32$.
    At $t=64$, both selectors outperform the matched-cost
    baseline.
    At $t=96$, both approach full-ensemble error-rate
    performance with $186{,}368$ LLR updates,
    approximately $19\%$ fewer than the $229{,}376$ updates
    required by the full ensemble.
}
  \label{fig:static_EES_tsweep_eBCHm7t10}
\end{figure}

\begin{figure}[!t]
  \centering
  \includegraphics[width=\columnwidth]
    {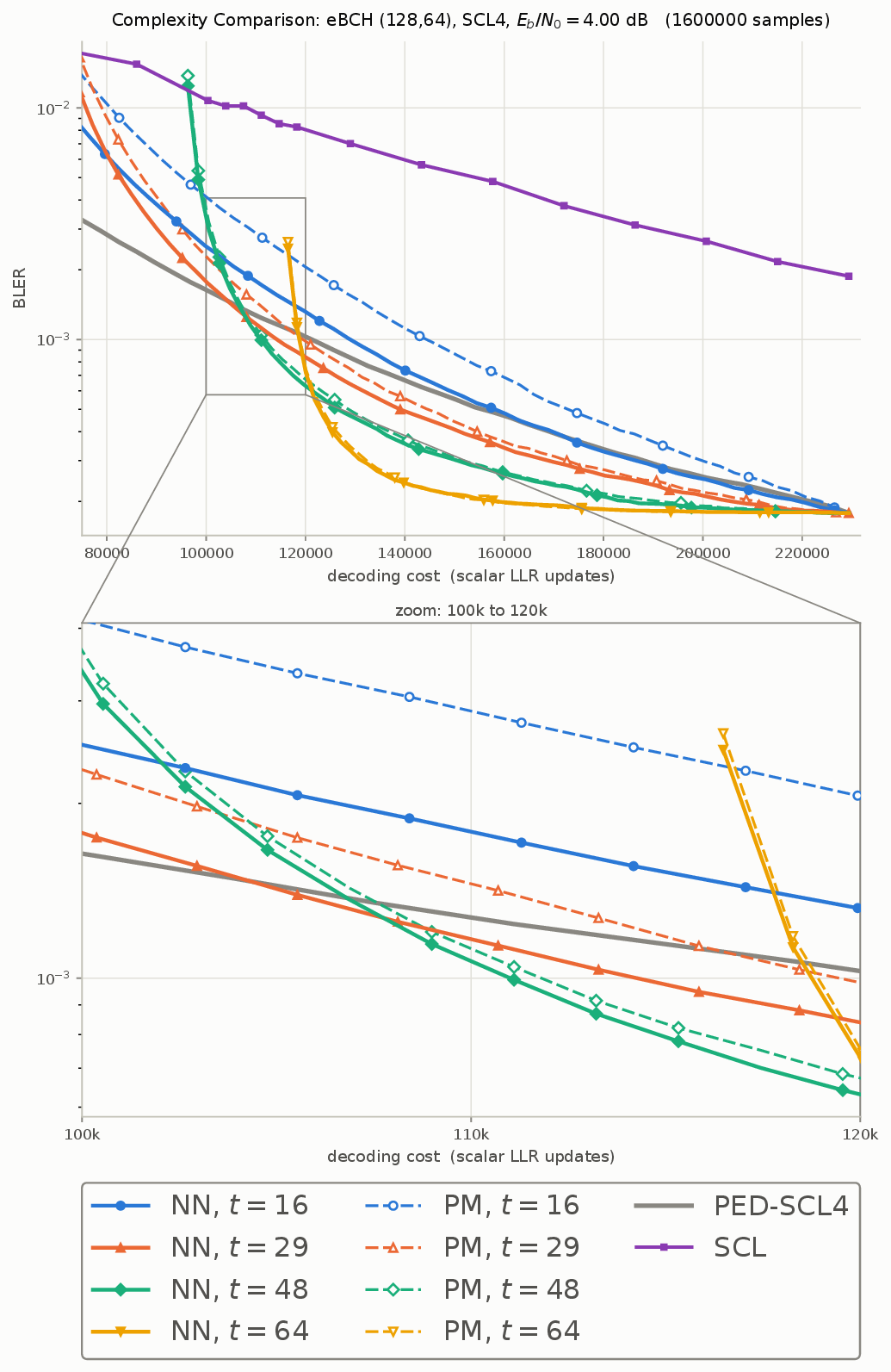}
\caption{
Performance--complexity tradeoff for $(128,64)$ eBCH PED-SCL
at $E_b/N_0=4$ dB, with $M=64$ and $L=4$.
Each EES curve fixes $t$ and varies $M_t$,
with LLR-update counts given by~\eqref{eq:EES_single_stage_cost}.
NN-EES improves the tradeoff at earlier selection stages.
At $t=29$, it outperforms the smaller-ensemble baseline
at a lower LLR-update budget than required by PM-EES.
At $t=48$ and $t=64$, both selectors achieve similar
error-rate performance and outperform the baseline
at higher LLR-update budgets.
}
  \label{fig:complexity_front_2p00}
\end{figure}

Figures~\ref{fig:static_EES_tsweep_eBCHm7t10}
and~\ref{fig:complexity_front_2p00} show the performance of EES
for the $(128,64)$ eBCH code using PED-SCL decoding
at different selection stages $t$.
NN-EES improves error-rate performance over PM-EES
at earlier selection stages.
At later stages, both selectors outperform smaller ensembles
at matched LLR-update counts, demonstrating the benefit of
initially exploring more decoding representations and then
concentrating computation on promising members.

\begin{figure}[!t]
  \centering
  \includegraphics[width=\columnwidth]
    {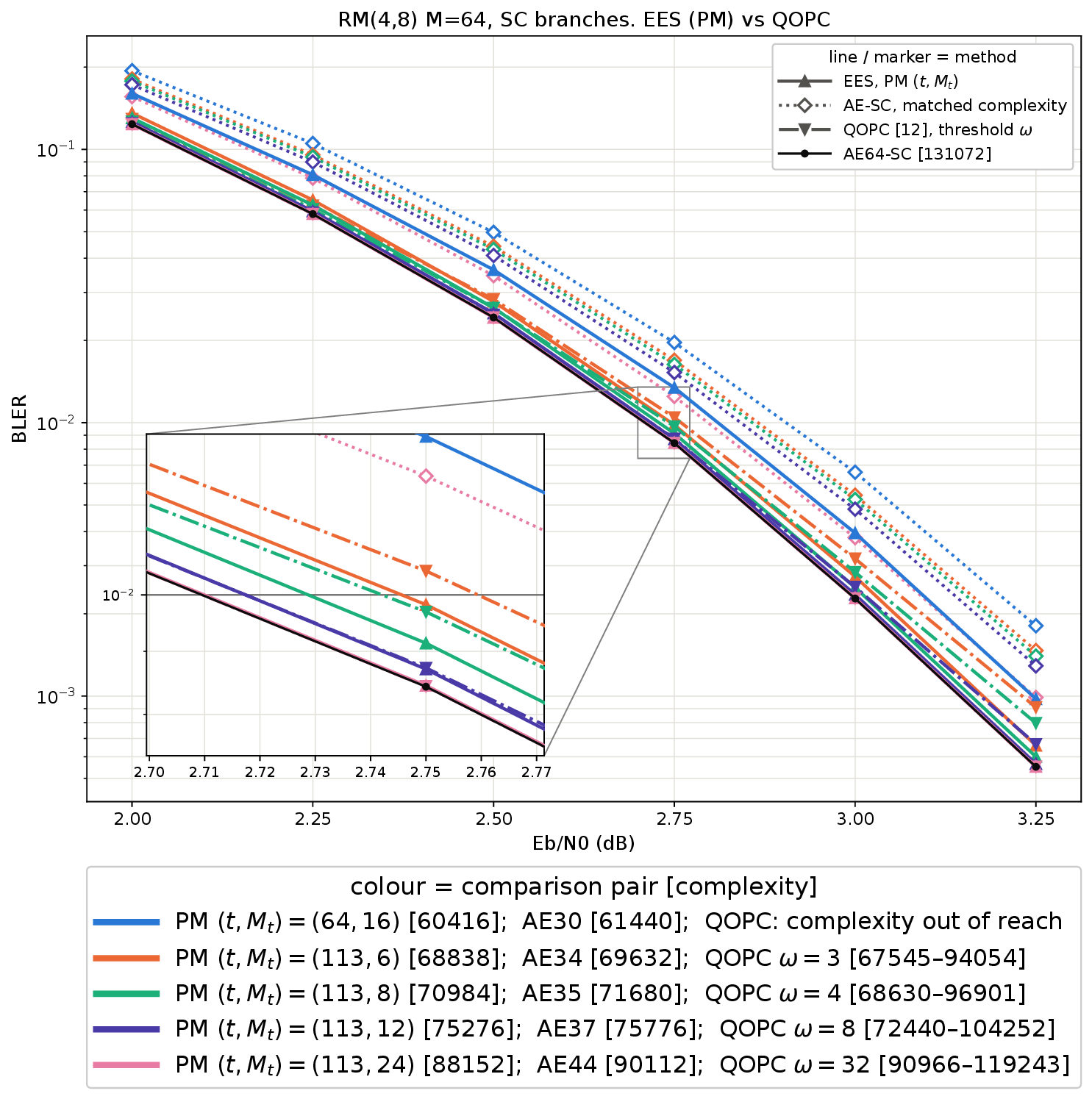}
\caption{
PM-EES and QOPC~\cite{tian2025_ES_QOPC} for
$\operatorname{RM}(4,8)$ AED with $M=64$ SC decoders.
PM-EES settings are denoted by $(t,M_t)$.
For QOPC, selection is triggered when at least $\omega$
members share the unique most frequent partial path metric.
Convergence checks begin at $t=113$ and continue until
selection is triggered or decoding completes.
Colors group settings with similar complexity, including
smaller AED ensembles as complexity baselines.
Bracketed values report decoder LLR-update counts.
For QOPC, these counts depend on the received vector;
the reported ranges give the minimum and maximum mean
counts over the simulated SNR range.
NN-EES is omitted because its performance is nearly
identical to that of PM-EES at these selection stages.
}
  \label{fig:PM_NN_EES_tsweep_RM_AED}
\end{figure}

Figure~\ref{fig:PM_NN_EES_tsweep_RM_AED} compares PM-EES
with QOPC~\cite{tian2025_ES_QOPC} for
$\operatorname{RM}(4,8)$ AED with $M=64$ SC decoders.
For QOPC, the parameter $\omega$ in the legend specifies
the minimum number of members that must share the unique
most frequent partial path metric to trigger selection.
At each verification stage, if this count reaches $\omega$,
QOPC retains the lowest-index member among those sharing
that metric and terminates all other members.

Following~\cite{tian2025_ES_QOPC}, QOPC begins checking
at $t=113$ and continues at subsequent verification stages.
We therefore also evaluate PM-EES at $t=113$ to compare
the two policies at the same initial selection opportunity.
Whereas QOPC adapts its selection stage to the observed
metric agreement and then retains one member, PM-EES
prescribes both the selection stage and the number of
retained members.
Consequently, PM-EES has a fixed decoder LLR-update count,
while QOPC's count depends on the received vector and
its mean varies with SNR.

At $(t,M_t)=(113,24)$, PM-EES achieves essentially
the full-ensemble BLER with $88{,}152$ LLR updates,
approximately $33\%$ fewer than the full ensemble.
QOPC with $\omega=32$ achieves comparable error-rate
performance, but its mean LLR-update count ranges from
$90{,}966$ to $119{,}243$ over the simulated SNR range.
With $M_t=6$ and $M_t=8$, PM-EES provides an approximately
$0.01$ dB gain over the corresponding QOPC settings
while maintaining a fixed LLR-update count.
Earlier PM-EES selection at $(t,M_t)=(64,16)$ further
reduces the count to $60{,}416$, below the minimum
attainable under the configured QOPC verification schedule.
These results demonstrate how prescribing both the
selection stage and the number of retained members
extends the available performance--complexity tradeoff.

\begin{figure}[!t]
  \centering
  \includegraphics[width=\columnwidth]{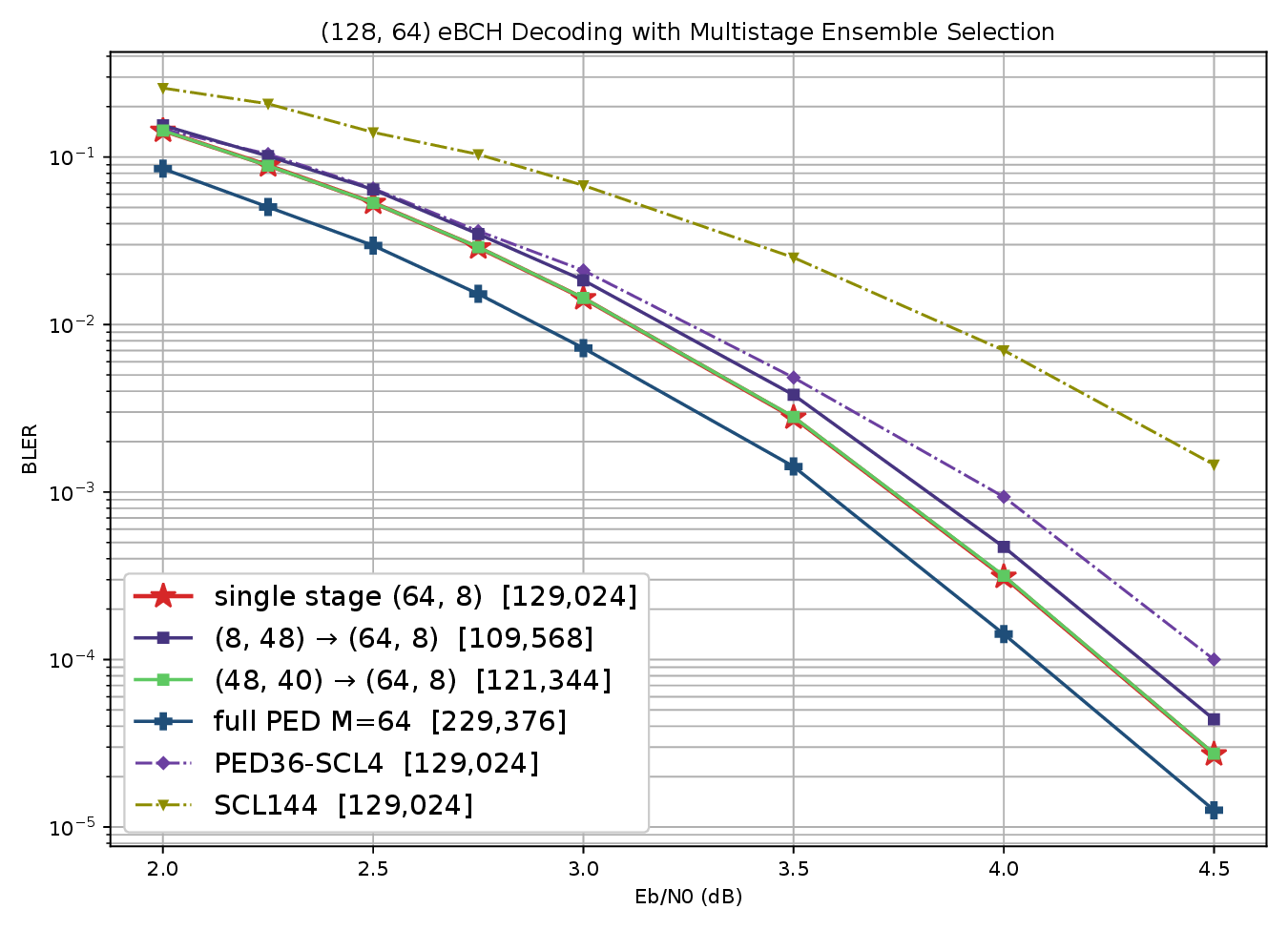}
\caption{
Multistage EES for $(128,64)$ eBCH PED with $M=64$
and $L=4$.
A schedule is denoted by
$(t_1,M_{t_1})\rightarrow(t_2,M_{t_2})$.
The $(48,40)\rightarrow(64,8)$ schedule uses PM-EES
at both stages, while $(8,48)\rightarrow(64,8)$ uses
NN-EES followed by PM-EES.
The NN-EES scorer uses the architecture and training
setup described in this section.
Both schedules closely track the error-rate performance
of single-stage EES at $(t,M_t)=(64,8)$.
Their LLR-update counts are $121{,}344$ and $109{,}568$,
respectively, approximately $6\%$ and $15\%$ fewer than
the $129{,}024$ updates required by the single-stage baseline.
Square brackets report LLR-update counts calculated
from~\eqref{eq:EES_multistage_cost}.
}
  \label{fig:multistage}
\end{figure}

Figure~\ref{fig:multistage} illustrates multistage EES,
which provides additional flexibility in the
performance--complexity tradeoff.
PM-EES or NN-EES can be used at each selection stage,
depending on the available offline training resources.

\section{Conclusions}\label{sec:conclude}

We proposed early ensemble selection (EES) for parallel
ensembles with SC-style component decoders, using partial
path metrics and their statistics to retain a prescribed
number of members at predefined stages.
Simulations on eBCH PED and RM AED show that suitable
selection schedules approach full-ensemble error-rate
performance with fewer decoder LLR updates.
NN-EES improves error-rate performance over PM-EES at
earlier selection stages, while the two selectors achieve
similar performance at later stages.
Future work includes joint optimization of selection stages
and retained member counts, adaptive ensemble selection,
joint path pruning across ensemble members, and further
information-theoretic analysis of EES.

\bibliographystyle{IEEEtran}
\bibliography{ref}  

\end{document}